\newtheorem{corollary}{Corollary}
\newcommand{\RomanNumeralCaps}[1]{\MakeUppercase{\romannumeral #1}}
\newcommand\overstar[1]{\ThisStyle{\ensurestackMath{%
			\setbox0=\hbox{$\SavedStyle#1$}%
			\stackengine{0pt}{\copy0}{\kern.2\ht0\smash{\SavedStyle*}}{O}{c}{F}{T}{S}}}}
\newcommand{\quotes}[1]{``#1''}
\begin{document}
 
\title{Spatio-Temporal Probabilistic Voltage Sensitivity Analysis - A Novel Framework for Hosting Capacity Analysis}

\author {\IEEEauthorblockN{Sai Munikoti,~\textit{Student Member, IEEE}, Mohammad Abujubbeh,~\textit{Student Member, IEEE}, \\ Kumarsinh Jhala,~\textit{Member, IEEE}, Balasubramaniam Natarajan,~\textit{Senior Member, IEEE}}}


\maketitle

\begin{abstract}
Smart grids are envisioned to accommodate high penetration of distributed photovoltaic (PV) generation, which may cause adverse grid impacts in terms of voltage violations. Therefore, PV Hosting capacity (HC) is being used as a planning tool to determine the maximum PV installation capacity that causes the first voltage violation and above which would require infrastructure upgrades. Traditional methods of HC analysis are computationally complex as they are based on iterative load flow algorithms that require investigation of a large number of scenarios for accurate assessment of PV impacts. This paper first presents a computationally efficient analytical approach to compute the probability distribution of voltage change at a particular node due to random behavior of randomly located multiple distributed PVs. Next, the derived distribution is used to identify voltage violations for various PV penetration levels and subsequently determine the HC of the system without the need to examine multiple scenarios. Results from the proposed spatio-temporal probabilistic voltage sensitivity analysis and the HC are validated via conventional load flow based simulation approach on the IEEE 37 and IEEE 123 node test systems.\footnote{This work has been submitted to the IEEE for possible publication. Copyright may be transferred without notice, after which this version may no longer be accessible.}
 
\end{abstract}


\IEEEpeerreviewmaketitle
\vspace{-0.5cm}
\section{Introduction}

\IEEEPARstart{P}{ower} grid is undergoing significant changes to meet modern-day energy demand in a more efficient manner. Integration of renewable energy sources, especially rooftop Photovoltaics (PVs), is a potential solution to (1) lower the overall carbon footprint; (2) lower operational cost; (3) provide various ancillary services such as peak load shaving; and (4) restore voltage services to critical loads during contingencies. Therefore, many countries are aiming to meet a major portion of energy demand through renewable energy sources. For example, by 2050, USA, China, EU and India are projected to have $63$\%, $67$\%, $70$\% and $73$\% of their total energy use met through renewables, respectively \cite{irena2018global}. Despite the aforementioned benefits, high penetration levels of PVs may impact the grid negatively in terms of voltage fluctuations and stability. This necessitates the need for thoroughly analyzing the grid in the presence of PVs to maximize their integration benefits. In this regard, PV hosting capacity (HC) as a planning tool has attracted the attention of many researchers and practitioners. HC refers to the maximum amount of PV generation that can be accommodated in the distribution system, while keeping system operational constraints within their safe limits, without the need for infrastructure upgrades.

A comprehensive HC analysis monitors power quality, power loss, thermal overload, protection devices, and voltage deviation for different PV penetration levels. With increasing PV penetration levels in distribution systems, many operational issues have emerged, including voltage violations \cite{alyami2014adaptive,olivier2015active} which directly impact the HC. Therefore, the development of an accurate, yet computationally efficient HC analysis framework is essential to ensure efficient, economic, and reliable operation of the distribution system. Most of the existing methods for evaluating HC are simulation-based and require the execution of multiple load-flow runs for various PV allocation scenarios. The drawbacks of simulation-based HC studies are (1) high computational complexity, which increases with the size of the network; (2) scenario-based and scenario-specific results, which do not provide any performance guarantee for a more general case; and 3) very conservative results, typically based on worse case scenarios. As voltage is the primary concern for many utilities while determining the HC \cite{tonkoski2012impact,ding2016distributed}, voltage sensitivity analysis (VSA) can help identify voltage violations, which in turn can be used to compute the HC of the system. Traditional methods of VSA such as load flow based numerical approaches have been used by various researchers \cite{smith2012stochastic,dubey2016estimation, rylander2015streamlined} to compute the HC. As numerical VSA methods are simulation-based, they possess the same demerits as monte-carlo based power flow methods. Additionally, at the planning stage, there could be scenarios where the grid operator may not be aware of the actual PV locations in the network. Under these scenarios of random power change at unknown locations of the grid, numerical VSA approaches would involve simulation of a large number of scenarios to account for temporal and spatial uncertainties associated with power change. This would result in high computational complexity. To overcome the drawbacks of numerical methods, there are some limited analytical approaches for VSA that have been proposed in \cite{8017483, 8973956, Munikoti2020probabilistic}. The analytical VSA paradigm enables quick, yet accurate, estimation of node vulnerability to voltage violations. The powerfulness of this paradigm lies in its ability to account for spatio-temporal uncertainties associated with power change at system nodes in a computationally efficient manner.
This paper proposes a novel analytical spatio-temporal probabilistic voltage sensitivity analysis (ST-PVSA) framework to derive the probability distribution of voltage change for an unbalanced distribution system that incorporates uncertainties associated with both power change (intermittent PV generation) and the locations of PV units in the system. This fundamental theoretical result is then used to simplify HC computation. 

\subsection{Literature review}

Zain \textit{et. al.} in \cite{zain2020review} have reviewed the literature related to HC and classified the efforts into four major categories (deterministic, stochastic, optimization-based, and streamlined) based on the available data and the type of study to be performed. Most of the existing approaches for HC depend on numerical load flow-based methods \cite{smith2012stochastic,ding2016distributed, dubey2016estimation, rylander2015streamlined}, and involve the analysis of multiple PV deployment scenarios. For instance, in \cite{ding2016distributed}, a scenario is generated by randomly allocating PVs in the network, and then load flow is executed for each penetration level until a voltage violation is encountered. To cover all possible locations, the complete process is repeated for multiple scenarios, thereby presenting a huge computational burden. \cite{dubey2016estimation} assigns each feeder a minimum and maximum HC, corresponding to the most conservative and most optimistic HC value. However, the overvoltage risk within the range of two HC endpoint values is not quantified. Similarly, authors in \cite{jain2019quasi}, propose a quasi-static-time-series (QSTS) based dynamic PV HC methodology. Here, power flow analysis is conducted on the load and PV data over one year, where the time duration of violation is also monitored along with the total violations count. For a real distribution model with thousands of nodes and one-second resolution data, an annual simulation could take a few days \cite{jain2019quasi}.
Furthermore, the PV and load uncertainties have significant influences upon HC values. As a result, probabilistic HC methods have gained attention \cite{xu2014probabilistic, valverde2018estimation}. Though the probabilistic approaches can effectively describe the uncertainty in fluctuations of PVs and loads, most of these approaches are simulation-based and thus are computationally inefficient. More importantly, the performance of these approaches relies heavily on the availability of data. 

In addition to HC, numerical VSA methods have also been used to guide various grid applications such as voltage regulation, DER allocations, etc. \cite{yan2012voltage, newaz2019coordinated, kang2019reactive}. For instance, authors in \cite{yan2012voltage} propose a method for analyzing voltage variations due to PV generation fluctuations, considering a variety of factors. However, its dependency on the inefficient simulation method limits its applications to large scale distribution networks. Similarly, authors in \cite{kang2019reactive} develop an optimization model for the electric vehicle charging schedule based on VSA approaches. Still, the requirements of iterative executions of power flow calculations and optimization models hinder its application in real-world scenarios. Thus, traditional methods suffer from high computational complexity and do not provide analytical insight into the underlying physics of the system. 
Therefore, to overcome the drawbacks of numerical methods, there are some limited analytical approaches for VSA that have been proposed. Authors in \cite{brenna2010voltage,zad2016centralized}, develop an algorithm based on VSA which optimally manages active and reactive powers of DGs to keep the system voltages inside the limits. Here, instead of repeating load flow calculations to solve the optimization problem, a sensitivity matrix is used to conduct load flow computation in a non-iterative manner, reducing the computational burden significantly. However, the algorithms proposed are not properly validated with standard test systems. Authors in \cite{klonari2016application}, have taken a probabilistic approach where smart meter data is used along with sensitivity analysis to define boundary values of various operation indices. This approach does not account for unbalanced load conditions. In \cite{8973956, Munikoti2020probabilistic}, authors have developed a new probabilistic voltage sensitivity approach (PVSA) to quantify voltage change in a computationally efficient way and accounts for the temporal uncertainties associated with random power change at fixed locations of the grid. However, the assumption of a fixed location is not generic enough to account for unknown locations of PV installations. Our prior work \cite{jhala2018probabilistic} on PVSA provides a quick and efficient tool for estimating the distribution of voltage change in a balanced distribution system due to spatial randomness in PV installations.

To summarize, probabilistic VSA represents a viable, low complexity, systematic approach to HC computation. Therefore, in this work, a generic computationally efficient analytical framework for probabilistic voltage sensitivity is developed for an unbalanced distribution system that systematically accounts for the spatio-temporal uncertainties associated with PV generation. This spatio-temporal probabilistic voltage sensitivity analysis (ST-PVSA) framework is then employed to determine the HC of an unbalanced power distribution system. The proposed formulation helps grid operators prepare for the future modernized grid by providing new insights on the impact of high PV penetration on grid operations.

\subsection{Contributions}
This work proposes a novel stochastic method of grid voltage sensitivity analysis which is used to calculate the likelihood of node voltage exceeding operational bounds. This framework is then employed to determine the PV HC of an unbalanced distribution system. The key contributions of this paper are listed below:
\begin{itemize}
    \item The probability distribution of voltage change due to random change in complex power across random locations of the distribution grid is derived analytically.
\item Unlike \cite{jhala2018probabilistic}, which is only valid for balanced networks, the proposed framework is generic enough to work for both balanced and unbalanced distribution systems.

\item The proposed ST-PVSA method is used to (1) analyze the aggregate effect of spatially random distribution of PVs on the feeder voltage, and (2) determine the probability of node voltage exceeding allowable limits. Analytical results are validated using simulation results on the IEEE 37-node test system.

\item The proposed ST-PVSA is employed to determine PV HC in significantly less time with accurate results compared to existing load flow-based approaches.
\end{itemize}
The rest of the paper is organized as follows: Section \RomanNumeralCaps{2} discusses the typical simulation-based method for HC. Then, the probability distribution of voltage change with spatio-temporal uncertainties is derived in Section \RomanNumeralCaps{3}, followed by its validation in Section \RomanNumeralCaps{4}. In Section \RomanNumeralCaps{5}, the derived distribution is used to determine the HC and validated with a load flow-based approach, and finally, conclusions are provided in Section \RomanNumeralCaps{6}.
\vspace{-0.4cm}
\section{HC with simulation-based approach}
\begin{figure}[t]
\centering
	\includegraphics[width = 8.5cm, height=6.0cm]{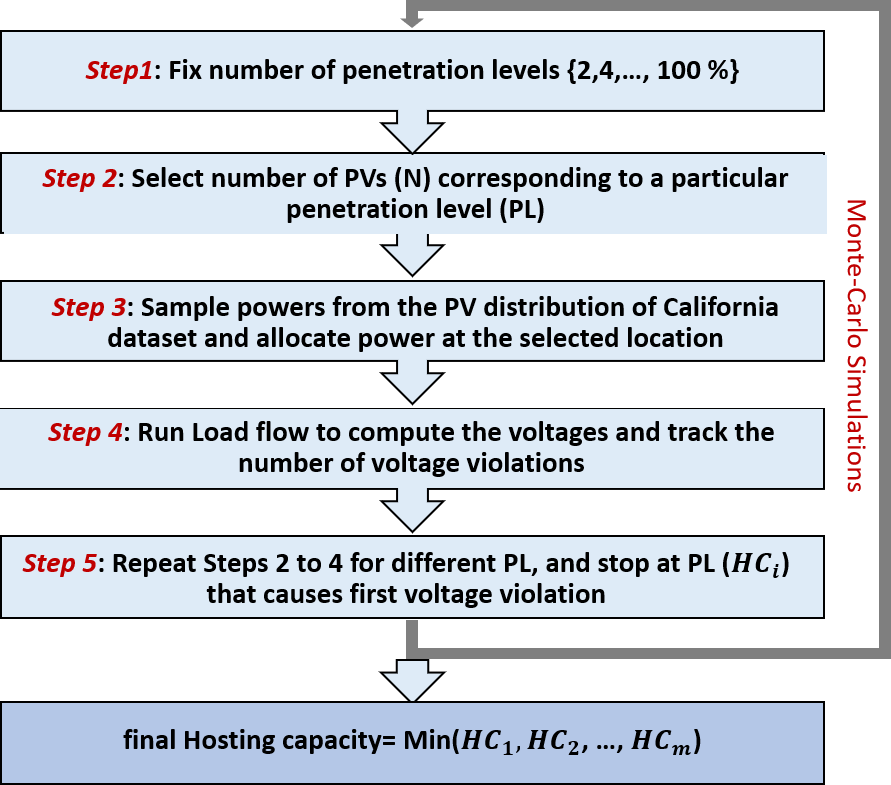}
	\caption{Flowchart of Load flow based HC method }
	\label{fig:3}
\end{figure}

This section describes a typical load flow based approach of determining the HC of the system. Here, the net power injection is increased in steps by allocating power to PVs located at random locations of the network. Then, load flow is executed for each penetration levels to track the number of node voltage violations throughout the network. This process is repeated for multiple penetration levels until the number of violations exceeds the threshold. The corresponding power (penetration level) is the HC for a particular PV deployment scenario. Thereafter, the complete process is repeated multiple times to cover all possible spatial distribution of the PV installations and the minimum capacity across all such scenarios is the final HC of the network. The scenario based analysis presents a huge computational burden due to the requirement of multiple load flow runs. Fig. \ref{fig:3} depicts the flow chart of the existing load flow based approach of computing HC \cite{ding2016distributed}.

Alternatively, this paper attempts to develop a probabilistic VSA approach that determines the HC in a computationally efficient manner. As mentioned in \cite{smith2012stochastic}, a comprehensive analysis of PV distribution needs to monitor voltage, protection, power quality and control limits. However, voltage is the primary concern for many utilities \cite{tonkoski2012impact,ding2016distributed}. So, similar to \cite{ding2016distributed, jain2019quasi}, this paper only considers voltage limits to determine the PV HC. The first step towards the probabilistic VSA approach for HC is to derive an analytical expression of voltage sensitivity due to random power change at random locations in the network, as presented in the next section.

\section{ST-PVSA for random distribution of PVs}

This section details the steps involved in the derivation of the probability distribution of voltage change at network nodes due to random power changes at random locations of the network. Throughout this paper, observation nodes are referred to those nodes where voltage change is observed and actor nodes are those where power changes. In our fundamental work on analytical VSA \cite{Munikoti2020probabilistic}, we derive an analytical approximation of voltage change at any observation node due to power change at multiple actor nodes for an unbalanced distribution system. This result is presented in Corollary 1 for completeness.
\begin{corollary}
	For an unbalanced power distribution system, change in complex voltage $\Delta V_{O}$ at an observation node ($O$) due to change in complex power at multiple actor nodes can be approximated by  
	\begin{equation}	
	\begin{bmatrix}
	\Delta V_{O}^{a} \\[15pt]
	\Delta V_{O}^{b} \\[15pt]
	\Delta V_{O}^{c} 
	\end{bmatrix} \approx- \sum_{A \epsilon \tilde{A}}  \left(	
	\begin{bmatrix}
\frac{\Delta S_{A}^{a\star}Z_{OA}^{aa}}{ V_{A}^{a\star}} + \frac{\Delta S_{A}^{ b\star}Z_{OA}^{ab}}{ V_{A}^{b\star}}+ \frac{\Delta S_{A}^{ c\star}Z_{OA}^{ac}}{ V_{A}^{c\star}} \\[8pt]
\frac{\Delta S_{A}^{ a\star}Z_{OA}^{ba}}{ V_{A}^{a\star}} + \frac{\Delta S_{A}^{ b\star}Z_{OA}^{bb}}{ V_{A}^{b\star}}+ \frac{\Delta S_{A }^{c\star}Z_{OA}^{bc}}{ V_{A}^{c\star}}  \\[8pt]
\frac{\Delta S_{A}^{a\star}Z_{OA}^{ca}}{ V_{A}^{a\star}} + \frac{\Delta S_{A}^{b\star}Z_{OA}^{cb}}{ V_{A}^{b\star}}+ \frac{\Delta S_{A}^{c\star}Z_{OA}^{cc}}{ V_{A}^{c\star}}
\end{bmatrix}\right)
\label{eq:2} 
\end{equation}
\end{corollary}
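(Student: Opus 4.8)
\emph{Proof sketch.}
The plan is to recover (\ref{eq:2}) as the first-order term of a perturbation expansion of the exact current-injection network equations, organized through the three-phase bus impedance matrix and then superposed over the actor nodes. I would keep the two layers of structure separate: the network (Kirchhoff) equations are \emph{linear} in the bus currents, so all of the nonlinearity lives in the local current--power--voltage relation at each bus, and that is the only place an approximation will be made.

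First I would write, for each phase $\phi\in\{a,b,c\}$ and each actor node $A$, the exact injection relation $I_A^{\phi}=\bigl(S_A^{\phi}/V_A^{\phi}\bigr)^{\star}$, and the exact network relation obtained by eliminating the slack bus from $\mathbf Y_{\mathrm{bus}}\mathbf V=\mathbf I$, namely $\mathbf V_{N}=\mathbf w-\mathbf Z\,\mathbf I_{L}$, where $\mathbf Z=(\mathbf Y_{NN})^{-1}$ collects the $3\times3$ mutual-impedance blocks $\mathbf Z_{OA}=[Z_{OA}^{\phi\psi}]$ that inherit the inter-phase coupling of the unbalanced line model, $\mathbf I_{L}$ stacks the phase load currents, and $\mathbf w$ depends only on the slack voltage. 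Differencing about the pre-disturbance operating point then gives the \emph{exact} identity $\Delta\mathbf V_O=-\sum_{A}\mathbf Z_{OA}\,\Delta\mathbf I_A$.

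The single approximation is the linearization of $\Delta\mathbf I_A$. Expanding $I_A^{\phi}=\bigl(S_A^{\phi}/V_A^{\phi}\bigr)^{\star}$ about $\bigl(S_A^{\phi,(0)},V_A^{\phi,(0)}\bigr)$,
\begin{equation}
\Delta I_A^{\phi}=\left(\frac{\Delta S_A^{\phi}}{V_A^{\phi}}\right)^{\star}-\left(\frac{S_A^{\phi,(0)}\,\Delta V_A^{\phi}}{\bigl(V_A^{\phi}\bigr)^{2}}\right)^{\star}+O\bigl(\lVert\Delta S\rVert^{2}\bigr),
\end{equation}
and I would retain only the first term, $\Delta I_A^{\phi}\approx\bigl(\Delta S_A^{\phi}/V_A^{\phi}\bigr)^{\star}$. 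Substituting this into $\Delta\mathbf V_O=-\sum_{A}\mathbf Z_{OA}\Delta\mathbf I_A$ and writing the matrix--vector product out row by row reproduces exactly the three scalar expressions assembled in (\ref{eq:2}); the overall minus sign is the load-reference convention carried over from $\mathbf V_{N}=\mathbf w-\mathbf Z\,\mathbf I_{L}$, and the off-diagonal impedances $Z_{OA}^{\phi\psi}$, $\phi\neq\psi$, supply the cross-phase terms.

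The main obstacle is justifying that discarding the second term is legitimate, i.e.\ that the residual in (\ref{eq:2}) is $O(\lVert\Delta S\rVert^{2})$ rather than merely heuristic. The points to pin down are: (i) the neglected term is proportional to $\Delta V_A^{\phi}$, which is itself $O(\lVert\Delta S\rVert)$ by the very relation being derived, so the back-reaction of the voltage change on the injected current is genuinely second order; (ii) with bus voltages near $1$ p.u.\ the coefficient $S_A^{\phi,(0)}/\bigl(V_A^{\phi}\bigr)^{2}$ stays bounded away from any singularity, so no small-denominator amplification occurs; and (iii) reading $V_A^{\phi}$ in the surviving denominator as either the nominal or the post-disturbance voltage changes the estimate only at second order, so both readings are consistent with the stated ``$\approx$''. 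The only structural hypothesis required is that the network is connected with at least one grounded source, so that $\mathbf Y_{NN}$ is invertible and $\mathbf Z_{OA}$ is well defined; radiality is not needed. With these points in place, (\ref{eq:2}) follows, in agreement with the derivation in \cite{Munikoti2020probabilistic}.
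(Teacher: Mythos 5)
Your derivation is essentially the same as the one the paper defers to: the paper's ``proof'' is only a citation to \cite{Munikoti2020probabilistic}, and the argument there is exactly your two-layer structure --- an exact linear network relation plus a single linearization of the load-current/power relation $\Delta I_A^{\phi}\approx(\Delta S_A^{\phi}/V_A^{\phi})^{\star}$. The only organizational difference is that the reference obtains $Z_{OA}$ by tracing voltage drops along the radial feeder (each line current is the sum of downstream injection currents, and swapping the order of summation makes the shared-path impedance appear), whereas you obtain it as a block of $\mathbf{Y}_{NN}^{-1}$. For a radial network with the slack bus as reference these coincide, and your route has the mild advantage of not presupposing radiality; note, however, that the paper explicitly defines $Z_{OA}$ as the impedance of the shared path from the source, so the radial reading is the one actually used downstream (e.g.\ when the paper later treats $R_{OA}$, $X_{OA}$ as path quantities).

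One point in your error analysis is not right as stated. The neglected term $-\bigl(S_A^{\phi,(0)}\Delta V_A^{\phi}/(V_A^{\phi})^{2}\bigr)^{\star}$ is proportional to the \emph{base} injection $S_A^{\phi,(0)}$ times $\Delta V_A^{\phi}$; since $\Delta V_A^{\phi}=O(\lVert\Delta S\rVert)$ but $S_A^{\phi,(0)}$ is a fixed operating-point quantity, this contribution is first order in $\lVert\Delta S\rVert$, not second order. The residual it induces in (\ref{eq:2}) is of relative size roughly $\lVert \mathbf Z\rVert\,|S^{(0)}|/|V|^{2}$ --- i.e.\ the per-unit voltage drop caused by the pre-existing loads --- which is small in a normally operated feeder but does not vanish as $O(\lVert\Delta S\rVert^{2})$. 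Your claim would be exact only if the actor nodes carried no base injection (new PV installations on otherwise unloaded buses) or if $S^{(0)}$ were itself counted as part of the perturbation. This does not affect the formula you derive, which is the one in (\ref{eq:2}), but the justification for discarding the back-reaction term should be ``small loading-induced voltage drop,'' not ``second order in $\Delta S$.'' Your points (ii) and (iii) are fine.
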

\begin{proof}
See \cite{Munikoti2020probabilistic}.
\end{proof}
In (\ref{eq:2}), superscript $a,b$ and $c$ represent the three phases; this notation is used throughout the paper. $V_{A}^{a\star}$ and $\Delta S_{A}^{a}$ represent complex conjugate of voltage at phase $a$ and complex power change at actor node $A$, respectively; $Z_{OA}$ denotes the impedance matrix including self and mutual line impedance of the shared path between observation node and actor node from the source node. 
The set $\tilde{A}$ contains all actor nodes.
Building on this fundamental result, this work develops ST-PVSA to derive the probability distribution of voltage change at any given node due to random spatial distribution of PV installations.

The change in complex voltage at any phase (say phase $a$) of observation node $O$ due to change in complex power at any phase of a single actor node $A$ is given by,
\begin{equation}
\Delta V_{OA}^{a} = \frac{\Delta S_{A}^{a\star}Z_{OA}^{aa}}{ V_{A}^{a\star}} + \frac{\Delta S_{A}^{ b\star}Z_{OA}^{ab}}{ V_{A}^{b\star}}+ \frac{\Delta S_{A}^{ c\star}Z_{OA}^{ac}}{ V_{A}^{c\star}}.
\label{eq:3a}
\end{equation}
The complex voltage change can be decomposed into real and imaginary parts as,
\begin{equation}
   \Delta V_{OA}^{a}=\Delta V_{OA}^{a,r} + j\Delta V_{OA}^{a,i}. 
   \label{eq:3}
\end{equation}
For simplicity, the voltage change expression throughout the derivation is shown for a single phase (phase $a$). However, similar form and approach is applicable to other phases as well. On expanding power change ($\Delta S_{A}^{a\star}=\Delta P-j\Delta Q$) and impedance ($Z_{OA}=R+jX$) components in (\ref{eq:3a}), the real ($\Delta V_{OA}^{a,r}$) and imaginary parts ($\Delta V_{OA}^{a,i}$) of voltage change at phase $a$ of the observation node $O$ can be written as,

\begin{equation*}
\begin{split}
\Delta V_{OA}^{a,r} = \sum_{h,u}^{}\frac{-1}{|V_{A}^{h}|}[\Delta P_{A}^{h}(R_{OA}^{u}cos(\omega_{A}) - X_{OA}^{u}sin (\omega_{A})) \\ 
+ \Delta Q_{A}^{h}(R_{OA}^{u}sin(\omega_{A}) + X_{OA}^{u} cos(\omega_{A})) ] \\
\Delta V_{OA}^{a,i} = \sum_{h,u}^{}\frac{-1}{|V_{A}^{h}|}[
\Delta P_{A}^{h}(R_{OA}^{u}sin(\omega_{A}) + X_{OA}^{u} cos(\omega_{A})) \\
+\Delta Q_{A}^{h}(X_{OA}^{u}sin (\omega_{A})-R_{OA}^{u}cos(\omega_{A}))]
\end{split}
\label{eq:4}
\end{equation*} 

where $h$ $\epsilon$ $\tilde{H}$ and $u$ $\epsilon$ $\tilde{U}$. The set $\tilde{H}=\{a,b,c\}$ denotes different phases and the set $\tilde{U}=\{ aa,ab,ac\}$ represents phase sequence for the corresponding phase. $\Delta P_{A}^{h}$ and $ \Delta Q_{A}^{h}$ are the active and reactive power changes, respectively. $R_{OA}^{h}$ and $ X_{OA}^{h}$ are the resistance and reactance of shared path between the observation node $O$ and actor node $A $ from the source node. $V_{A}^{h} $ denotes the complex rated voltage of actor node $A$. 
The magnitude and angle of voltage at a particular phase, say phase $a$, of node $A$ are given by $|V_{A}^{a}|$ and $\theta_{A}^{a}$, respectively with reference to the slack bus. Line voltage of the network is always kept within permissible limits, and thus it is reasonable to assume the phase difference of $120^{\circ}$ between the voltage angles of different phases with the same angle for all the node voltages of each phase. Based on this assumption, $\Delta V_{OA}^{a,r}$, $\Delta V_{OA}^{a,i}$ can be rewritten as:
\begin{equation}
\footnotesize
\begin{split}
\Delta V_{OA}^{a,r} =
\frac{-\Delta P_{A}^{a}R_{OA}^{aa}}{|V_{A}^{a}|} + 
\frac{\Delta P_{A}^{b}}{|V_{A}^{b}|}
\left(\frac{R_{OA}^{ab}}{2}-
\frac{\sqrt{3}X_{OA}^{ab}}{2} \right) + \\ 
\frac{\Delta P_{A}^{c}}{|V_{A}^{c}|}
\left(\frac{R_{OA}^{ac}}{2} + 
\frac{\sqrt{3}X_{OA}^{ac}}{2} \right) -
\frac{\Delta Q_{A}^{a}X_{OA}^{aa}}{|V_{A}^{a}|} + \\ 
\frac{\Delta Q_{A}^{b}}{|V_{A}^{b}|}\left(\frac{\sqrt{3}R_{OA}^{ab}}{2}+\frac{X_{OA}^{ab}}{2}\right) + 
\frac{\Delta Q_{A}^{c}}{|V_{A}^{c}|}\left(\frac{-\sqrt{3}R_{OA}^{ac}}{2}+\frac{X_{OA}^{ac}}{2}\right)
\label{eq:5}
\end{split}
\end{equation}

\begin{equation}
\footnotesize
\begin{split}
\Delta V_{OA}^{a,i} =\frac{-\Delta P_{A}^{a}X_{OA}^{aa}}{|V_{A}^{a}|} + 
\frac{\Delta P_{A}^{b}}{|V_{A}^{b}|}\left(\frac{\sqrt{3}R_{OA}^{ab}}{2}+\frac{X_{OA}^{ab}}{2}\right) +  \\ 
\frac{\Delta P_{A}^{c}}{|V_{A}^{c}|}\left(\frac{-\sqrt{3}R_{OA}^{ac}}{2}   +\frac{X_{OA}^{ac}}{2} \right) +
\frac{\Delta Q_{A}^{a}R_{OA}^{aa}}{|V_{A}^{a}|} + \\ 
\frac{\Delta Q_{A}^{b}}{|V_{A}^{b}|}\left(-\frac{R_{OA}^{ab}}{2}+\frac{\sqrt{3}X_{OA}^{ab}}{2}\right) + 
\frac{\Delta Q_{A}^{c}}{|V_{A}^{c}|}\left(-\frac{R_{OA}^{ac}}{2}-\frac{\sqrt{3}X_{OA}^{ac}}{2}\right)
\label{eq:6}
\end{split}
\end{equation}
The real (\ref{eq:5}) and imaginary (\ref{eq:6}) parts of the voltage change can further be represented in a simplified form as,
\begin{equation}
\Delta V_{OA}^{a,r}= \mathbf{(Z^{a,r})^T\Delta S}, \hspace{1cm} \Delta V_{OA}^{a,i}= \mathbf{(Z^{a,i})^T\Delta S} 
\label{eq:6b}
\end{equation}

\begin{minipage}{.1\textwidth}
\begin{align*}
\begin{split}
\footnotesize
\mathbf{Z^{a,r}}= \begin{bmatrix}
    -R_{OA}^{aa}\\[1pt]
	\frac{R_{OA}^{ab}}{2}-\frac{\sqrt{3}X_{OA}^{ab}}{2} \\[1pt]
    \frac{R_{OA}^{ac}}{2}+\frac{\sqrt{3}X_{OA}^{ac}}{2} \\[1pt]
    -X_{OA}^{aa} \\[1pt]
    \frac{\sqrt{3}R_{OA}^{ab}}{2}+\frac{X_{OA}^{ab}}{2} \\[1pt]
    \frac{-\sqrt{3}R_{OA}^{ac}}{2}+\frac{X_{OA}^{ac}}{2} \\[1pt]
	\end{bmatrix}
\end{split}
\end{align*}
\end{minipage}
\begin{minipage}{.1\textwidth}
\begin{align*}
\footnotesize
\mathbf{Z^{a,i}} = \begin{bmatrix}
    -X_{OA}^{aa}\\[1pt]
	\frac{\sqrt{3}R_{OA}^{ab}}{2}+\frac{X_{OA}^{ab}}{2} \\[1pt]
   \frac{-\sqrt{3}R_{OA}^{ac}}{2}+\frac{X_{OA}^{ac}}{2} \\[1pt]
    R_{OA}^{aa} \\[1pt]
    \frac{-R_{OA}^{ab}}{2}+\frac{\sqrt{3}X_{OA}^{ab}}{2} \\[1pt]
   \frac{-R_{OA}^{ac}}{2}-\frac{\sqrt{3}X_{OA}^{ac}}{2} \\[1pt]
	\end{bmatrix}
\end{align*}
\end{minipage}\\
\vspace{0.15cm}
\hspace{0.8cm} $\mathbf{\Delta S}= \left[\frac{ \Delta P_{A}^{a}}{|V_{A}^{a}|}
        \hspace{0.1cm}\frac{ \Delta P_{A}^{b}}{|V_{A}^{b}|}
       \hspace{0.1cm} \frac{ \Delta P_{A}^{c}}{|V_{A}^{c}|}
       \hspace{0.1cm} \frac{ \Delta Q_{A}^{a}}{|V_{A}^{a}|}
        \hspace{0.1cm}\frac{ \Delta Q_{A}^{b}}{|V_{A}^{b}|}
        \hspace{0.1cm}\frac{ \Delta Q_{A}^{c}}{|V_{A}^{c}|}\right]$ \\
where $\mathbf{Z^{a,r}}$ and $\mathbf{Z^{a,i}}$ are the vectors incorporating shared path impedance terms corresponding to real and imaginary parts of voltage change, respectively. To represent the random variation of PV generation, the real and reactive power change is modeled as a random variable. Consistent with the prior efforts in modeling PV generation as a time series with a trend component  and  Gaussian noise \cite{hassanzadeh2010practical, vasilj2015pv, jhala2019data}, the power variation is assumed to be Gaussian. It is important to note that the framework is quite general to account for any arbitrary random variable with finite mean and variance. Therefore, the vector $\mathbf{\Delta S}$, which incorporate the terms corresponding to the ratio of power change and constant base voltages, can be expressed as Gaussian random vector $ \mathbf{\Delta S} {\sim} \mathcal{N} (\mu_{\Delta S}, \textstyle\sum_{\Delta S})$ with $\bm{\mu_{\Delta S}}$ being mean vector, and covariance matrix $\bm{\textstyle\sum_{\Delta S}}$ as,
\vspace{-0.15cm}
\begin{equation*}
\footnotesize
\begin{bmatrix}
 \sigma^{2}_{\frac{P^{a}}{|V_{A}^{a}|}} \hspace{0.2cm} cov( \Delta P_{A}^{a}/|V_{A}^{a}|, \Delta P_{A}^{b}/|V_{A}^{b}|) \hdots cov( \Delta P_{A}^{a}/|V_{A}^{a}|, \Delta Q_{A}^{c}/| V_{A}^{c}|) \\
 \vdots \\ 
 \hspace{-0.05cm} cov( \Delta P_{A}^{a}|/V_{A}^{a}|, 
 \Delta Q_{A}^{c} / |V_{A}^{c}|) \hspace{2cm} \hdots \hspace{1.5cm} \sigma^{2}_{\frac{Q^{c}}{|V_{A}^{c}|}} 
 \end{bmatrix}
 \label{eq:8}
\end{equation*}
Here, the diagonal and off-diagonal elements indicate variance and covariance among the terms that are ratio of power changes and base voltages across different phases of actor nodes, respectively. The impedance of the shared line between a given observation node ($O$) and a random actor node can be modeled as a correlated random variable. The mean, variance and covariance of resistance $R_{OA}$ and reactance $X_{OA}$ corresponding to a given observation node $O$ can be estimated based on actual line impedance data. In addition, let $\bm{\mu_{Z^{r}}}$ and $\bm{\mu_{Z^{i}}}$ represent the mean of real ($\bm{Z^{a,r}}$) and imaginary ($\bm{Z^{a,i}}$) parts of impedance vector, respectively. The average is taken over all the nodes of the network with respect to the observation node. Similarly, $\bm{\textstyle\sum_{Z^{r}}}$ and $\bm{\textstyle\sum_{Z^{i}}}$ denote the covariance matrices of $\bm{Z^{a,r}}$ and $\bm{Z^{a,i}}$, respectively. The correlation coefficient between the shared path impedances for various actor nodes is computed based on network parameters. Particularly, the objective of this work is to derive the probability distribution of the magnitude of voltage change at an observation node due to random power variation of PVs located at random nodes, which will further be used to estimate the system HC. The probability distribution of real $\Delta V_{OA}^{a,r}$ and imaginary components $\Delta V_{OA}^{a,i}$ of the voltage change due to random spatial distribution of multiple PV units can be derived using the following steps:
\vspace{0.2cm} \\
\textit{\textbf{Step 1:} Compute mean and variance of $\Delta V_{OA}^{a,r}$ and $\Delta V_{OA}^{a,i}$ due to a single actor node}:\\
Using (\ref{eq:6b}), the mean of the voltage change can be expressed as the expectation of product of two terms, i.e., the shared path impedance vector ($\bm{Z^{a,r}}$ for real and $\bm{Z^{a,i}}$ for imaginary part) and power change vector $\bm{\Delta S}$. As the terms in the product are mutually independent, the expectation of their product can be applied to individual terms separately yielding the mean of real ($\mu_{r}$) and imaginary ($\mu_{i}$) parts as,
\begin{equation}
\begin{split}
    \mu_{r} = E[Z^{(a,r)^{T}}\Delta S]= E[Z^{(a,r)^{T}}]E[\Delta S]=  \mathlarger{\bm{\mu_{Z_{o}^{a,r}}}}
    \mathlarger{\bm{\mu_{\Delta S}}} \\
    \mu_{i} = E[Z^{(a,i)^{T}}\Delta S]= E[Z^{(a,i)^{T}}]E[\Delta S]= \mathlarger{\bm{\mu_{Z_{o}^{a,i}}}} \mathlarger{ \bm{\mu_{\Delta S}}}
    \label{eq:9}
\end{split}    
\end{equation}
Furthermore, the variance of real and imaginary parts
of the voltage change can be computed as shown below,
\begin{equation}
\begin{split}
    \text{Var}(\Delta V_{OA}^{a,r})= E[(Z^{(a,r)^{T}}\Delta S)^2]- E[(Z^{(a,r)^{T}}\Delta S)]^2 \\
    \text{Var}(\Delta V_{OA}^{a,i})= E[(Z^{(a,i)^{T}}\Delta S)^2]- E[(Z^{(a,i)^{T}}\Delta S)]^2.    
    \label{eq:11}
    \end{split}
\end{equation}
Since $Z_{r}^T$ and $\Delta S$ are independent, the expectation of their product can be written in terms of product of their individual expectation as, 
\begin{equation}
   E[Z^{(a,r)^T} \Delta S \Delta S^T Z^{a,r}] - (E[Z^{(a,r)^T}]E[\Delta S])^2. \label{eq:12}
\end{equation}
For simplicity, the equation for variance is shown for the real part of voltage change and a similar form exists for imaginary part. Now, using the properties of matrix trace, the variance of the real part can be rewritten as,
\begin{equation}
\begin{split}
    E[Tr(Z_{r}Z_{r}^T\Delta S \Delta S^T)] - (\mu_{Z^r}\mu_{\Delta S})^2 \\
    =Tr(E[Z_{r}Z_{r}^T]E[\Delta S \Delta S^T]) - (\mu_{Z^r}\mu_{\Delta S})^2 \\
=Tr[(\mu_{Z^r} \mu_{Z^r}^T+ \textstyle\sum_{Z^{r}})(\mu_{\Delta S} \mu_{\Delta S}^T+ \textstyle\sum_{\Delta S})] - (\mu_{Z^r}\mu_{\Delta S})^2 \\
    =Tr(\mu_{Z^r} \mu_{Z^r}^T \mu_{\Delta S} \mu_{\Delta S}^T ) +Tr(\mu_{Z^r} \mu_{Z^r}^T \textstyle\sum_{\Delta S}) + \\
    Tr(\textstyle\sum_{Z^{r}} \mu_{\Delta S} \mu_{\Delta S}^T) + Tr(\textstyle\sum_{Z^{r}} \textstyle\sum_{\Delta S}) - (\mu_{Z^r}^T \mu_{\Delta S})^2 \\
\end{split}
\label{eq:13}
\end{equation}
Now, the term $Tr(\mu_{Z^r} \mu_{Z^r}^T \mu_{\Delta S} \mu_{\Delta S}^T)$ is rearranged to $ (\mu_{Z^r}\mu_{\Delta S})^2 $, that cancels the last term of eqn. (\ref{eq:13}). After applying trace operator, the variance of the real part can be expressed as,
\begin{equation}
\bm{\mu_{Z^r}^T \textstyle\sum_{\Delta S} \mu_{Z^r}} + \bm{\mu_{\Delta S}^T \textstyle\sum_{Z^{r}} \mu_{\Delta S}} + Tr\bm{(\textstyle\sum_{Z^{r}} \textstyle\sum_{\Delta S})} 
\label{eq:15}
\end{equation}
Following the same steps from equations (\ref{eq:12}- \ref{eq:15}), the variance of imaginary part of voltage change can be written as,
\begin{equation}
\bm{\mu_{Z^i}^T \textstyle\sum_{\Delta S} \mu_{Z^i}} + \bm{\mu_{\Delta S}^T \textstyle\sum_{Z^{i}} \mu_{\Delta S}} + Tr\bm{(\textstyle\sum_{Z^{i}} \textstyle\sum_{\Delta S})}
\label{eq:16}
\end{equation}\\
\textit{\textbf{Step 2:} Compute covariance between real $\Delta V_{OA}^{a,r}$ and imaginary $\Delta V_{OA}^{a,i}$ parts of voltage change}:\\
The covariance between the real and imaginary parts of voltage change can be expressed as:
\begin{equation*}
\small
\begin{split}
  Cov(\Delta V_{OA}^{a,r}, \Delta V_{OA}^{a,i}) &=  E(\Delta V_{OA}^{a,r} \Delta V_{OA}^{a,i})-E(\Delta V_{OA}^{a,r})E(\Delta V_{OA}^{a,i})\\
  &= E[Z_{A}^{(a,r)^T}\Delta S_{A} Z_{A}^{(a,i)^T}\Delta S_{A}]
\end{split}
\label{eq:17b}
\end{equation*}
$\bm{Z^{(a,r)^T}\Delta S}$ and $\bm{Z^{(a,i)^T}\Delta S}$ are expanded using eqn. (\ref{eq:5}) and (\ref{eq:6}) as, to express covariance as the expectation of following term, 
\begin{equation*}
\tiny
\begin{split}
\Bigg[ \Bigg.
& \Bigg\{ \Bigg.
\frac{\Delta P_{A}^{a}}{|V_{A}^{a}|}(-R_{OA}^{aa}) + \frac{\Delta P_{A}^{b}}{| V_{A}^{b}|}\left(\frac{R_{OA}^{ab}}{2}-\frac{\sqrt{3}X_{OA}^{ab}}{2}\right) +
\frac{\Delta P_{A}^{c}}{| V_{A}^{c}|}\left(\frac{R_{OA}^{ac}}{2}+\frac{\sqrt{3}X_{OA}^{ac}}{2}\right) \\ & 
-\frac{\Delta Q_{A}^{a}}{| V_{A}^{a}|}X_{OA}^{aa} + 
\frac{\Delta Q_{A}^{b}}{| V_{A}^{b}|}\left(\frac{\sqrt{3}R_{OA}^{ab}}{2}+\frac{X_{OA}^{ab}}{2}\right) + 
\frac{\Delta Q_{A}^{c}}{| V_{A}^{c}|}\left(-\frac{\sqrt{3}R_{O1}^{ac}}{2}+\frac{X_{O1}^{ac}}{2}\right) 
\Bigg. \Bigg\} \times \\ 
& \Bigg\{ \Bigg.
\frac{\Delta P_{A}^{a}}{| V_{A}^{a}|}(-X_{OA}^{aa}) + \frac{\Delta P_{A}^{b}}{| V_{A}^{b}|}\left(\frac{\sqrt{3}R_{OA}^{ab}}{2}+\frac{X_{OA}^{ab}}{2}\right) + \frac{\Delta P_{A}^{c}}{| V_{A}^{c}|}\left(\frac{-\sqrt{3}R_{OA}^{ac}}{2}+\frac{X_{OA}^{ac}}{2}\right) \\ & 
+\frac{\Delta Q_{A}^{a}}{| V_{A}^{a}|}R_{OA}^{aa} + 
\frac{\Delta Q_{A}^{b}}{| V_{A}^{b}|}\left(-\frac{R_{OA}^{ab}}{2}+\frac{\sqrt{3}X_{OA}^{ab}}{2}\right) + 
\frac{\Delta Q_{A}^{c}}{| V_{A}^{c}|}\left(-\frac{R_{OA}^{ac}}{2}-\frac{\sqrt{3}X_{OA}^{ac}}{2}\right)
\Bigg. \Bigg\} 
\Bigg. \Bigg]
\end{split}
\label{eq:18a}
\end{equation*}
The terms inside the expectation operator are cross multiplied to yield the final expression as,
\begin{equation}
\footnotesize
\begin{split}
\frac{\rho_{p^{a}}\sigma_{p^{a}}^{2}}{| V_{A}^{a}|^{2}} \mu_{R^{aa}}\mu_{X^{aa}} - \frac{\rho_{q^{a}}}{| V_{A}^{a}|^{2}} \sigma_{q^{a}}^{2}\mu_{R^{aa}}\mu_{X^{aa}}+\\ 
\frac{\rho_{p^{b}}}{|  V_{A}^{b}|^{2}}\sigma_{p^{b}}^{2}\left(0.43\mu_{R^{ab}}^{2}-0.5\mu_{R^{ab}}\mu_{X^{ab}}-0.43\mu_{X^{ab}}^2\right) + \\
\frac{\rho_{p^{c}}}{| V_{A}^{c}|^{2}}\sigma_{p^{c}}^{2}\left(-0.43\mu_{R^{ac}}^{2}-0.5\mu_{R^{ac}}\mu_{X^{ac}}+0.43\mu_{X^{ac}}^2\right) + \\
\frac{\rho_{q^{b}}}{| V_{A}^{b}|^{2}}\sigma_{q^{b}}^{2}\left(-0.43\mu_{R^{ab}}^{2}+0.5\mu_{R^{ab}}\mu_{X^{ab}}+ 0.43\mu_{X^{ab}}^2\right) + \\
\frac{\rho_{q^{c}}}{| V_{A}^{c}|^{2}}\sigma_{q^{c}}^{2}\left(0.43\mu_{R^{ac}}^{2}+
0.5\mu_{R^{ac}}\mu_{X^{ac}}-0.43\mu_{X^{ac}}^2\right) + \\ 
\frac{\rho_{p^{a}q^{a}}}{| V_{A}|^{2}}\sigma_{p^{a}}\sigma_{q^{a}}\left(-\mu_{R^{aa}}^{2}+\mu_{X^{aa}}^{2} \right)+\\
\frac{\rho_{p^{b}q^{b}}}{| V_{A}^{b}|^{2}}\sigma_{p^{b}}\sigma_{q^{b}}\left(0.5\mu_{R^{ab}}^{2}+\sqrt{3}\mu_{R^{ab}}\mu_{X^{ab}}-0.5\mu_{X^{ab}}^2\right) + \\
\frac{\rho_{p^{c}q^{c}}}{| V_{c}^{a}|^{2}}\sigma_{p^{c}}\sigma_{q^{c}}\left(0.5\mu_{R^{ac}}^{2}-\sqrt{3}\mu_{R^{ac}}\mu_{X^{ac}}-0.5\mu_{X^{ac}}^2\right)
\end{split}
\label{eq:18b}
\end{equation}
where, $\rho_{p^{h}}$ and $\rho_{q^{h}}$ denote the correlation coefficients of active power and reactive power change among the same phase of different actor nodes with $h$ representing the corresponding phase term ($h=\{a,b,c\}$), respectively. $\rho_{p^{h}q^{h}}$ denotes the correlation coefficient between the active and reactive power within the same phase. Similarly, $\sigma^{2}_{p^{h}}$ and $\sigma^{2}_{q^{h}}$ depict the variance of active power and reactive power change, respectively. For random impedance part, $\mu_{R^{k}}$ and $\mu_{X^{k}}$ denote the mean of shared path resistance and reactance between all the nodes and a certain observation node, respectively, with $k$ representing the corresponding self/mutual impedance terms ($k={aa,ab,ac,ba,bb,bc,ca,cb,cc}$). It is important to note that all the defined parameters with respect to power change are user defined and usually set based on historical data, whereas, the parameters corresponding to shared path impedance are computed based on the network specifications.
\vspace{0.2cm}\\
\textit{\textbf{Step 3:} Compute covariance between $\Delta V_{OA1}^{a,(r,i)}$ and $\Delta V_{OA2}^{a,(r,i)}$}:\\
The covariance between the real component of complex
voltage change caused by two different PVs
located at actor nodes $A1$ and $A2$ can be calculated as:
\begin{equation}
\footnotesize
\begin{split}
  Cov(\Delta V_{OA1}^{a,r}, \Delta V_{OA2}^{a,r}) &=  E(\Delta V_{OA1}^{a,r} \Delta V_{OA2}^{a,r})-E(\Delta V_{OA1}^{a,r})E(\Delta V_{OA2}^{a,r}) \\
  &= E[\bm{Z_{A1}^{(a,r)^{T}}\Delta S_{A1} Z_{A2}^{(a,r)^{T}}\Delta S_{A2}}]
\end{split}
\label{eq:18}
\end{equation}
Using equation (\ref{eq:5}), $\bm{Z^{(a,r)^{T}}\Delta S}$ can be expanded for both the actor nodes in the following way,
\begin{equation*}
\tiny
\begin{split}
{\Huge E}
\Bigg[ \Bigg.
& \Bigg\{ \Bigg.
\frac{\Delta P_{1}^{a}}{|V_{1}^{a}|}(-R_{O1}^{aa}) + \frac{\Delta P_{1}^{b}}{| V_{1}^{b}|}\left(\frac{R_{O1}^{ab}}{2}-\frac{\sqrt{3}X_{O1}^{ab}}{2}\right) +
\frac{\Delta P_{1}^{c}}{| V_{1}^{c}|}\left(\frac{R_{O1}^{ac}}{2}+\frac{\sqrt{3}X_{O1}^{ac}}{2}\right)  \\ &
-\frac{\Delta Q_{1}^{a}}{| V_{1}^{a}|}X_{O1}^{aa} + 
\frac{\Delta Q_{1}^{b}}{| V_{1}^{b}|}\left(\frac{\sqrt{3}R_{O1}^{ab}}{2}+\frac{X_{O1}^{ab}}{2}\right) + 
\frac{\Delta Q_{1}^{c}}{| V_{1}^{c}|}\left(-\frac{\sqrt{3}R_{O1}^{ac}}{2}+\frac{X_{O1}^{ac}}{2}\right) 
\Bigg. \Bigg\} \times \\ 
& \Bigg\{ \Bigg.
\frac{\Delta P_{2}^{a}}{|V_{2}^{a}|}(-R_{O2}^{aa}) + \frac{\Delta P_{2}^{b}}{|V_{2}^{b}|}\left(\frac{R_{O2}^{ab}}{2}-\frac{\sqrt{3}X_{O2}^{ab}}{2}\right) +
\frac{\Delta P_{2}^{c}}{| V_{2}^{c}|}\left(\frac{R_{O2}^{ac}}{2}+\frac{\sqrt{3}X_{O2}^{ac}}{2}\right)  \\ &
-\frac{\Delta Q_{2}^{a}}{| V_{2}^{a}|}X_{O2}^{aa} + 
\frac{\Delta Q_{2}^{b}}{| V_{2}^{b}|}\left(\frac{\sqrt{3}R_{O2}^{ab}}{2}+\frac{X_{O2}^{ab}}{2}\right) + 
\frac{\Delta Q_{2}^{c}}{| V_{2}^{c}|}\left(-\frac{\sqrt{3}R_{O2}^{ac}}{2}+\frac{X_{O2}^{ac}}{2}\right)
\Bigg. \Bigg\} 
\Bigg. \Bigg]
\end{split}
\label{eq:19}
\end{equation*}
For simplicity, actor nodes $A1$ and $A2$ are denoted by subscript $1$ and $2$, respectively. Similar to (\ref{eq:18b}), the terms inside the expectation operator can be cross multiplied to express covariance as,
\begin{equation}
\footnotesize
\begin{split}
\frac{\rho_{p^{a}}}{|V_{A}^{a}|^{2}}\sigma_{p^{a}}^{2} \mu_{R^{aa}}^{2} + \frac{\rho_{q^{a}}}{|V_{A}^{a}|^{2}} \sigma_{q^{a}}^{2}\mu_{X^{aa}}^{2} + \\ 
\frac{\rho_{p^{b}}}{| V_{A}^{b}|^{2}}\sigma_{p^{b}}^{2}\left(0.25\mu_{R^{ab}}^{2}-0.86\mu_{R^{ab}}\mu_{X^{ab}}+0.75\mu_{X^{ab}}^2\right) + \\
\frac{\rho_{p^{c}}}{| V_{A}^{c}|^{2}}\sigma_{p^{c}}^{2}\left(0.25\mu_{R^{ac}}^{2}+0.86\mu_{R^{ac}}\mu_{X^{ac}}+0.75\mu_{X^{ac}}^2\right) +\\
\frac{\rho_{q^{b}}}{| V_{A}^{b}|^{2}}\sigma_{q^{b}}^{2}\left(0.75\mu_{R^{ab}}^{2}+0.86\mu_{R^{ab}}\mu_{X^{ab}}+ 0.25\mu_{X^{ab}}^2\right) + \\
\frac{\rho_{q^{c}}}{| V_{A}^{c}|^{2}}\sigma_{q^{c}}^{2}\left(0.75\mu_{R^{ac}}^{2}-
0.86\mu_{R^{ac}}\mu_{X^{ac}}+0.25\mu_{X^{ac}}^2\right) - \\ 
\frac{\rho_{p^{a}q^{a}}}{| V_{A}^{a}|}\sigma_{p^{a}}\sigma_{q^{a}}\left(2\mu_{R^{aa}}\mu_{X^{aa}}\right)-\\
\frac{\rho_{p^{b}q^{b}}}{| V_{A}^{b}|^{2}}\sigma_{p^{b}}\sigma_{q^{b}}\left(0.86\mu_{R^{ab}}^{2}-\mu_{R^{ab}}\mu_{X^{ab}}-0.86\mu_{X^{ab}}^2\right) + \\
\frac{\rho_{p^{c}q^{c}}}{| V_{c}^{c}|^{2}}\sigma_{p^{c}}\sigma_{q^{c}}\left(-0.86\mu_{R^{ac}}^{2}-\mu_{R^{ac}}\mu_{X^{ac}}+0.86\mu_{X^{ac}}^2\right) 
\end{split}
\label{eq:20}
\end{equation}
The correlation coefficients and variances are same as defined in equation (\ref{eq:18b}).
Now, following the same steps from (\ref{eq:18}) to (\ref{eq:20}), yields the covariance between two actor nodes for the imaginary part of voltage change.
\vspace{0.2cm}\\
\textit{\textbf{Step 4:} Compute mean and variance of $\Delta V_{OA}^{a,r}$ and $\Delta V_{OA}^{a,i}$ due to randomly distributed multiple actor nodes}:
\vspace{0.1cm}\\
The mean value of real and imaginary parts of voltage change due to randomly distributed multiple actor nodes can be written as,
\begin{equation}
\begin{split}
E[\Delta V_{O}^{a,r}]= \mathlarger{\mu_{r}} = E\sum_{A=1}^N \Delta V_{OA}^{a,r}=N\mathlarger{ \bm{\mu_{Z_{o}^{r}} \mu_{\Delta  S}}} \\
E[\Delta V_{O}^{a,i}]= \mathlarger{\mu_{i}} = E\sum_{A=1}^N \Delta V_{OA}^{a,i}= N\mathlarger{ \bm{\mu_{Z_{o}^{i}} \mu_{\Delta S}}} 
\end{split}
\label{eq:23}
\end{equation}
Further, the variance of real and imaginary parts of the net voltage change can be expressed as,
\begin{equation}
\begin{split}
\text{Var}[\Delta V_{O}^{a,r}] = \sigma_{r}^{2} = Var\sum_{A=1}^N (Z_{A}^{(a,r)^T}\Delta S) \\
= N Var(\bm{Z^{(a,r)^T}\Delta S}) + 2\sum_{I<J}Cov(\Delta V_{OI}^{a,r},\Delta V_{OJ}^{a,r}) \\
\text{Var}[\Delta V_{O}^{a,i}] = \sigma_{i}^{2} = 
Var\sum_{A=1}^N (Z_{A}^{(a,i)^T}\Delta S) \\
=N Var(\bm{Z^{(a,i)^T}\Delta S}) + 2\sum_{I<J}Cov(\Delta V_{OI}^{a,i},\Delta V_{OJ}^{a,i})
\end{split}
\label{eq:24}
\end{equation}
Now, by invoking Lindeberg-Feller central limit theorem, it can be shown that the real and imaginary parts of voltage change follow non zero mean Gaussian distribution with mean and variance as stated in equations (\ref{eq:23}) and (\ref{eq:24}), respectively. As the square of non zero mean Gaussian variable follows non-central chi-square distribution \cite{mathai1992quadratic}, the distribution of the squared magnitude of $\Delta V_{O}^{a}$ is the sum of dependent non-central chi-square variables.
\begin{equation}
|\Delta V_{O}^{a}|^{2} \sim \sigma_{r}^{2}\chi_{1}^{2}(\mu_{r}^{2}) + \sigma_{i}^{2}\chi_{1}^{2}(\mu_{i}^{2})
\end{equation}
where $\sigma^{2}$ and $\mu^{2}$ are the weight and non centrality parameters of non central chi square distribution with one degree of freedom corresponding to both real and imaginary parts of the voltage change. The sum of weighted non-central chi-square distributions can then be approximated with a scaled non-central chi-square with weight $\lambda$, non-centrality parameter
$w$, and $v$ degrees of freedom as shown below \cite{mathai1992quadratic}:
\begin{equation}
\left|\Delta V_{O}^{a}\right|^{2} \sim \lambda \chi_{v}^{2}(w)     
\end{equation}
where,
\begin{equation}
\begin{split}
\lambda &=\frac{\sigma_{r}^{4}\left(1+2 \mu_{r}^{2}\right)+\sigma_{i}^{4}\left(1+2 \mu_{i}^{2}\right)}{\sigma_{r}^{2}\left(1+2 \mu_{r}^{2}\right)+\sigma_{i}^{2}\left(1+2 \mu_{i}^{2}\right)}\\
w&=\frac{\left(\sigma_{r}^{2} \mu_{r}^{2}+\sigma_{i}^{2} \mu_{i}^{2}\right)\left(\sigma_{r}^{2}+\sigma_{i}^{2}+2 \sigma_{r}^{2} \mu_{r}^{2}+2 \sigma_{i}^{2} \mu_{i}^{2}\right)}{\sigma_{r}^{4}+\sigma_{i}^{4}+2 \sigma_{r}^{4} \mu_{r}^{2}+2 \sigma_{r}^{4} \mu_{i}^{2}} \\
v&=\frac{\left(\sigma_{r}^{2}+\sigma_{i}^{2}\right)\left(\sigma_{r}^{2}+\sigma_{i}^{2}+2 \sigma_{r}^{2} \mu_{r}^{2}+2 \sigma_{i}^{2} \mu_{i}^{2}\right)}{\sigma_{r}^{2}+\sigma_{i}^{2}+2\left(\sigma_{r}^{4} \mu_{r}^{2}\right)+2\left(\sigma_{i}^{4} \mu_{i}^{2}\right)}
\end{split}
\label{eq:26}
\end{equation}
Since the square root of a non-central chi-square random variables follows a Rician distribution \cite{mathai1992quadratic}, the magnitude of voltage change will follow a Rician distribution:
\begin{equation}
|\Delta V_{O}^{a}| \sim Rician(k, \sigma) 
\label{eq:27}
\end{equation}
where $k=\sqrt{w}$ and $\sigma = \sqrt{\lambda}$. The magnitudes of voltage changes for other phases follow a similar expression with the respective phase values. 

If the power variation is assumed to follow a zero-mean Gaussian distribution, which is a typical assumption used in many prior works \cite{hassanzadeh2010practical, vasilj2015pv, jhala2019data}, $\bm{\mu_{\Delta S}}$ vanishes from the mean (eqn. \ref{eq:9}) and variance (eqn. \ref{eq:15}-\ref{eq:16}) equations of voltage change. This eventually leads to zero value for $\mu_{r}$ and $\mu_{i}$. Again, by invoking Lindeberg-Feller central limit theorem, one can show that the real and imaginary parts of the voltage change follow zero-mean normal distributions as, 
\begin{equation}
\begin{split}
\Delta V_{O}^{a, r} \overset{D}{\sim} \mathcal{N} (0, \sigma_{r}^{2}) \\
\Delta V_{O}^{a, i} \overset{D}{\sim} \mathcal{N} (0, \sigma_{i}^{2})
\end{split}
\label{eq:28a}
\end{equation}
The square of the magnitude of voltage change follows a gamma distribution \cite{lancaster2005chi, chuang2012approximated}, and subsequently, the magnitude of voltage change follows a Nakagami distribution \cite{nakagami1960m},
\begin{equation}
|\Delta V_{O}^{a}| \sim Nakagami(m, \omega),
\label{eq:28}
\end{equation}
where parameter $\theta = 2(\sigma_{r}^4 + \sigma_{i}^4+2c^2) /(\sigma_{r}^2 + \sigma_{i}^2) $, shape parameter $m=(\sigma_{r}^2 + \sigma_{i}^2) /\theta$, scale parameter $\omega=\sqrt{m\theta}$, and $c$ being the covariance between the real and imaginary parts of voltage change.
In the next sections, the proposed ST-PVSA method is first validated using simulations, and then it is employed to estimate PV HC in a efficient manner. 

\section{Validation of ST-PVSA}

The proposed probability distribution of the voltage change is validated on the modified IEEE 37-node test system. The nominal voltage of the test system is 4.8 kV. The actual distribution of the magnitude of the voltage change is obtained using Newton-Raphson based sensitivity analysis method, and the theoretical distribution is obtained using the proposed method of ST-PVSA. A scenario is considered for simulation where $9$ PV units are located at random locations in the distribution system. The power at the actor nodes, i.e., the nodes injected with PVs, varies randomly due to fluctuations in PV generation. In practice, PV generation can be modeled as a time series with a trend component along with some random noise typically modeled as a Gaussian random variable \cite{hassanzadeh2010practical, vasilj2015pv, jhala2019data}. Consistent with these prior efforts, in this work, change in PV generation at a particular time instant is modeled as a zero mean Gaussian random variable. Further,
the covariance matrix $\textstyle\sum_{\Delta S}$ captures the spatial correlation of PV generation, and can be estimated in practice based on a number of factors. The spatial correlation exists because of
geographical proximity as PVs in the same region typically exhibit same generation profile. The diagonal elements of the covariance matrix contain variances that depend on the size of PV units and the off-diagonal elements capture the effect of geographical proximity of these PV units. Specifically, in our simulation, the variance of
change in real power ($\Delta P$) is set to $5$ kW and the variance
of change in reactive power ($\Delta Q$) is set to $0.5$ kVar.
Further, the values of the correlation coefficients $\rho_{p^{h}}$, $\rho_{q^{h}}$ and $\rho_{p^{h}q^{h}}$ are set to $0.2$, $0.2$, and $-0.5$, respectively for all the phases. Variance can be set to zero for nodes with no PVs. Now, for random impedance part, the mean and variance of resistance and reactance between random actor node and observation node $9$ is calculated from data of the IEEE 37-node test system. The Value of correlation coefficient between resistance
and reactance is $0.99$. 
\begin{figure}[t]
	\includegraphics[width = 8.0cm,height=3.7cm]{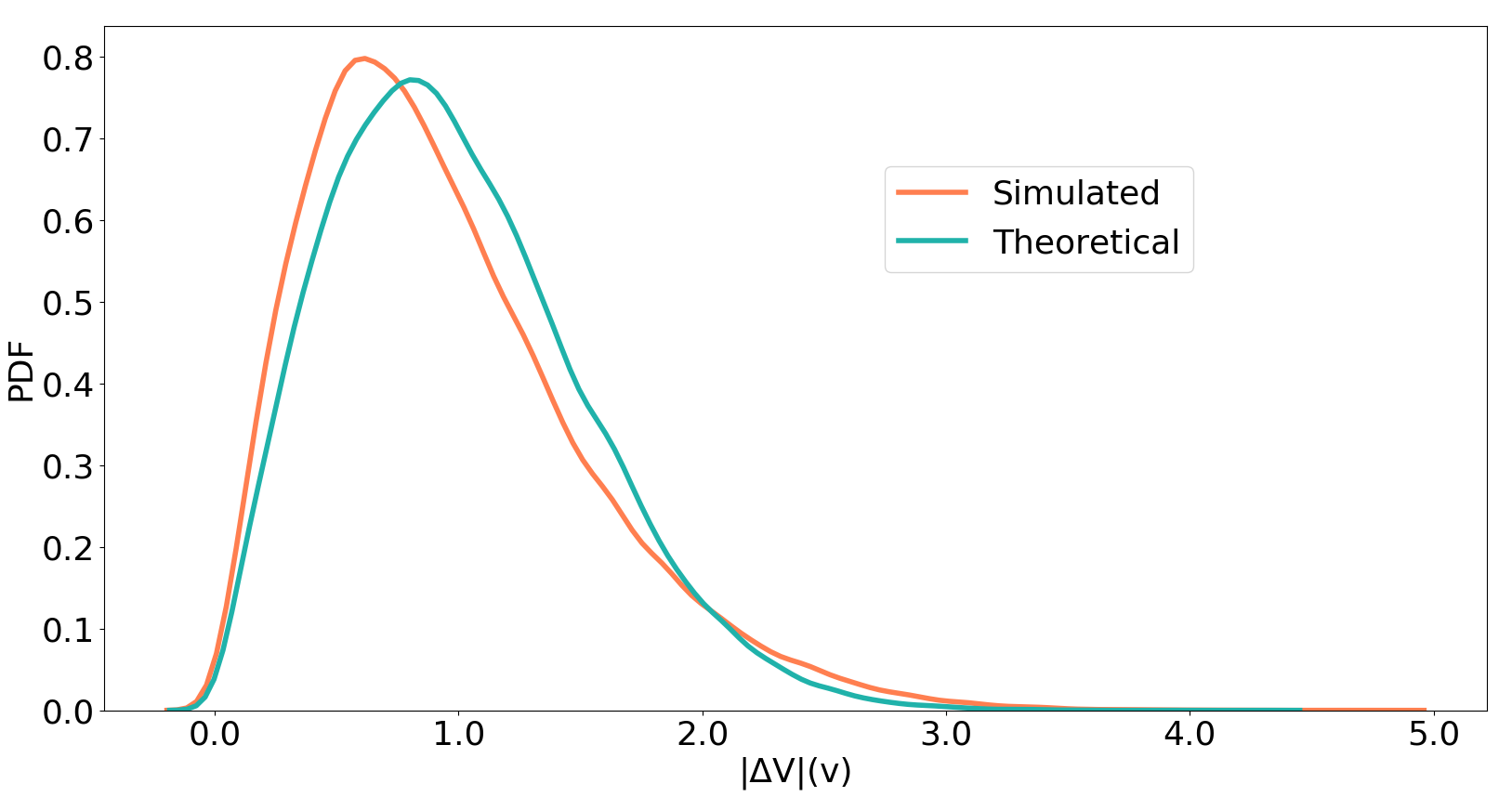}
	\caption{Distribution of voltage change at node 9 }
	\label{fig:2}
\end{figure}
Fig. \ref{fig:2} compares the actual distribution of the magnitude of voltage change with the proposed theoretical case. The actual distribution of $\Delta V_{9}$ is obtained by randomly varying powers of all actor nodes at phase-a and subsequently, voltage change at node $9$ is computed by using Newton-Raphson based method. Further, Monte-Carlo simulations (MCS) are incorporated to capture the uncertainties associated with the power changes. Here, voltage changes are computed for one million Monte-Carlo simulations. The scaled histogram of $|\Delta V_{9}|$ is depicted through the orange curve in the figure \ref{fig:2}. The theoretical distribution computed with equations (\ref{eq:23}) and (\ref{eq:24}) is shown by blue curve in Fig. \ref{fig:2}. It can be observed that the probability distribution computed using the proposed method is very close to the actual simulated distribution with $0.18$ as Jensen-Shannon distance. This experiment demonstrates the effectiveness of the proposed ST-PVSA approach.
\section{ST-PVSA for PV hosting capacity}
This section presents the methodology for computing HC with the proposed ST-PVSA approach. As ST-PVSA provides the probability distribution of voltage change at a node due to random power changes at random locations of the network, it suffices to identify voltage violations for different PV penetration levels. The procedure to determine HC begins with fixing the number of penetration levels. Then, the number of PV units ($N_{k}$) integrated into the system is computed for each penetration level. Using equation (\ref{eq:29}), $N_{k}$ is determined statistically based on the distribution of PV size from California dataset \cite{CaliforniaSolar}. The penetration level is divided into various bands based on the percentage of total demand. For instance, $k$ varies from $1$ to $5$ for $5$ bands, i.e., $(0-20 \%),(21-40 \%)$ ... $(81-100 \%)$. A unique $N_{k}$ is defined for each band such that the same number of PV units is used for all penetration levels in that particular band. This is logical in the sense that it is not necessary to increase the number of PV units for simulating increasing penetration level rather it can be achieved by increasing the power injection in the existing PVs. However, the power injections cannot be increased beyond a certain limit due to the restriction of PV size. Therefore, $N_{k}$ increases as we move to the higher penetration band. $N_{k}$ for a particular penetration band $k$ is computed as following:
\begin{figure}[t]
\centering
	\includegraphics[width = 8.5cm, height=6.0cm]{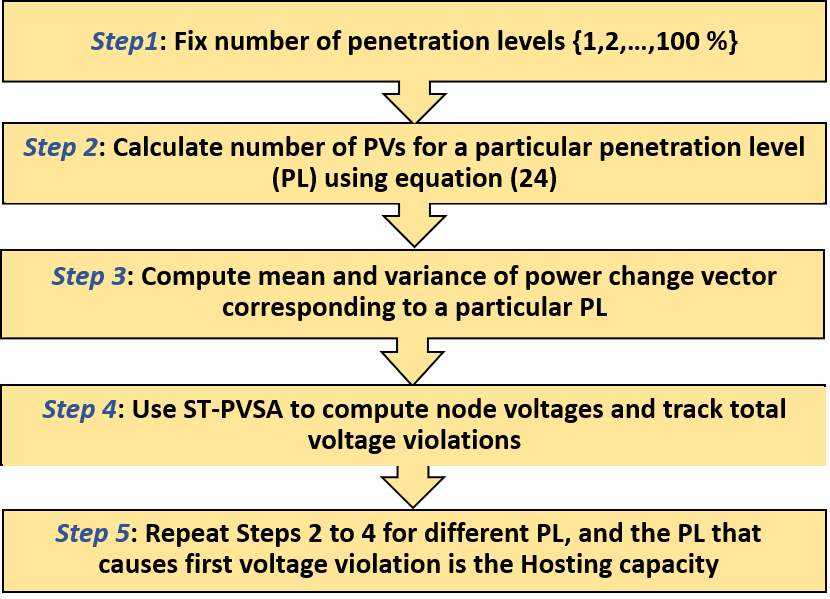}
	\caption{Flowchart of ST-PVSA approach for HC }
	\label{fig:4}
\end{figure}
\begin{equation}
    N_{k}= \frac{\text{Mean penetration level (power injection) for band $k$}}{\text{Max PV size}}
    \label{eq:29}
\end{equation}
where \quotes{Max PV size} comes from the PV size distribution and the mean penetration level is the average power injection for band $k$. Then, in the third step, $N_{k}$ is used to obtain the mean ($\mu_{\Delta S}^{l}$) of power change vector $\Delta S$ (equation \ref{eq:5}) for each penetration level $l$, such that $\mu_{\Delta S}^{l}N^{l}\approx P_{l}$, where $P_{l}$ and $N^{l}$ are the net power injection and PV units for penetration level $l$. The complex voltage change due to power injection is added to the base voltage to get the future voltage. Following the same arguments as mentioned in Theorem $2$ of \cite{abujubbeh2020probabilistic}, the distribution of future voltage is shown to follow Rician with the parameters as defined in equation (\ref{eq:26}). The mean of real ($\mu_{r}$) and imaginary ($\mu_{i}$) parts of voltage change (\ref{eq:23}) are added to the corresponding part of node base voltages to get the mean value of the future voltage, which is then plugged in equation (\ref{eq:27}) to find the distribution of future voltage at all nodes of the network. Nodes with a probability of voltage violation greater than $0.5$ are identified as vulnerable nodes and recorded as a violation. The complete process is repeated for increasing penetration levels until the algorithm encounters the first violation. The corresponding penetration level is the HC of the system. Fig. \ref {fig:4} summarizes the ST-PVSA steps to compute HC. 
To evaluate the performance of ST-PVSA in determining the HC, load flow based HC is used as a benchmark. Similar to the ST-PVSA approach, the PV penetration level is fixed from $1\%$ to $100\%$ level at $1\%$ increment. For each penetration level, Monte Carlo simulations are repeated $10k$ times thereby creating one million different PV deployment scenarios. For illustration purposes, the loads on the test network are chosen as reported in the IEEE PES distribution system analysis subcommittee report. However, the proposed method is generic enough to accommodate other loading scenarios such as daytime (10 am-2 pm) maximum load and daytime minimum.
Finally, for each penetration level, $N_{k}$ locations are selected randomly to allocate PV units and load flow is executed to track the voltage violations. 
Further, the HC analysis is validated on the IEEE 37-node as well as the IEEE 123-node network to demonstrate the scalability of the proposed approach. Table \ref{Table:Hc_value} presents the HC values computed with the proposed ST-PVSA based approach and existing load flow-based method for a various number of scenarios. It is worth noting that for each PV penetration level, ST-PVSA needs to be run once (independent of scenarios), whereas multiple simulations are required for convergence in the load flow-based approach. The estimation accuracy is fairly high with a significantly low computational burden as reflected from the execution times in Table \ref{Table:Hc_Time}. ST-PVSA is an order two faster than the load flow-based approach and the gap will further increase as the network size grows.
\begin{table}[t]
    \centering
    \captionsetup{width=.85\textwidth}
    \caption{Hosting capacity with Load flow and ST-PVSA}
    \label{Table:Hc_value}
	\begin{tabular}{|c|c|c|c|c|}
	\hline
    Test Network & Scenarios\textsuperscript{*} & 1k & 10k & 30k  \\ 
    \cline{1-5}
    \multirow{ 2}{*}{IEEE 37} & Load Flow & 34 & 32 & 31 \\
    \cline{2-5}                      
    & ST-PVSA & \multicolumn{3}{c|}{33} \\
    \hline
    \multirow{ 2}{*}{IEEE 123} & Load Flow & 44 & 43 & 42 \\
    \cline{2-5}  
    & ST-PVSA & \multicolumn{3}{c|}{39} \\
    \hline
  \end{tabular}
\end{table}
\begin{table}[t]
    \centering
    \captionsetup{width=.85\textwidth}
    \caption{Execution time (min) with Load flow and ST-PVSA}
    \label{Table:Hc_Time}
	\begin{tabular}{|c|c|c|c|c|}
	\hline
    Test Network & Scenarios\textsuperscript{*} & 1k & 10k & 30k  \\ 
    \cline{1-5}
    \multirow{ 2}{*}{IEEE 37} & LF & 1.93 & 18.8 & 55.7 \\
    \cline{2-5}                      
    & ST-PVSA & \multicolumn{3}{c|}{1.13} \\
    \hline
    \multirow{ 2}{*}{IEEE 123} & LF & 43.93 & 400.2 & 1195.6 \\
    \cline{2-5}  
    & ST-PVSA & \multicolumn{3}{c|}{3.91} \\
    \hline
  \end{tabular}
\end{table}

\section{Conclusion}

This work presents an analytical approach to compute the probability distribution of voltage change at a particular node as a function of random change in power at random locations of the network due to distributed PV units. This method is then employed to determine the HC of the system without the need to investigate multiple scenarios. The proposed ST-PVSA approach is computationally efficient and yet accurate. In the IEEE 123-node test network, ST-PVSA based HC is an order two faster compared to conventional load flow based approach and this gap will further increase as the network size grows. As part of future work, we plan to extend ST-PVSA for dynamic HC involving continuous time series data of load and PV with the system allowing violations of small duration in accordance with the ANSI standards.
\vspace{-0.2cm}
\bibliographystyle{IEEEtran}
\bibliography{IEEEabrv, PVSA_ST_arxiv}
\end{document}